\documentclass[aps,pra,twocolumn,superscriptaddress,longbibliography,nofootinbib]{revtex4-1}
\usepackage[normalem]{ulem}

\usepackage{float}
\usepackage{graphicx}
\usepackage{dcolumn}
\usepackage{amssymb}
\usepackage{appendix}
\usepackage{physics}
\usepackage{mathtools}
\usepackage{esvect}
\usepackage{wrapfig}
\usepackage{amsthm}
\usepackage{verbatim}
\usepackage{bbm}
\usepackage[colorlinks=true,linkcolor=blue,citecolor=red,plainpages=false,pdfpagelabels]{hyperref}
\usepackage{subfigure}

\usepackage{placeins}
\usepackage[mathscr]{euscript}

\usepackage{amsmath,amssymb,amsfonts}
\usepackage{algorithmic}
\usepackage{graphicx}
\usepackage{textcomp}
\usepackage[svgnames]{xcolor}
\usepackage{booktabs}
\usepackage{caption}
\usepackage{hyperref}
\usepackage{colortbl}
\usepackage{amsmath}
 \usepackage{bbm}
 \usepackage{braket}
\usepackage[normalem]{ulem}

\usepackage{amsthm, amssymb}

\let\Pr\relax
\definecolor{lightgreen}{HTML}{90EE90}

\def\Pr{\operatorname{Pr}}

\def\({\left(}
\def\){\right)}
\def\[{\left[}
\def\]{\right]}

\newtheorem{theorem}{Theorem}

\newtheorem{proposition}{Proposition}

\def\>{\rangle}
\def\<{\langle}

\begin{document}

\title{Cryptographic Fragility of Standard Quantum Repeater Protocols}

\author{Abhishek Sadhu}
\email{a.sadhu@bham.ac.uk}

\affiliation{School of Computer Science, University of Birmingham, Edgbaston, Birmingham B15 2TT, United Kingdom}

\author{Sharu Theresa Jose}
\affiliation{School of Computer Science, University of Birmingham, Edgbaston, Birmingham B15 2TT, United Kingdom}

\date{\today}

\begin{abstract}
The security of the proposed quantum Internet relies on repeater protocols designed under the assumption of stochastic, characterizable noise. We demonstrate that in adversarial environments this assumption induces performance vulnerabilities for computationally bounded repeater nodes. We show that the standard BBPSSW distillation protocol recursively purifies error syndromes rather than entanglement. This leads to a state of low fidelity despite diagnostic metrics indicating perfect convergence. Moreover, we show that the verifier cannot check the adversarial influence via the maximum likelihood estimation algorithm since it is blind to computationally bounded observers. To address these vulnerabilities, we propose a Cryptographic Network Stack centered on a trapdoor verification protocol. The protocol exploits private randomness to restore operational stability without requiring channel characterization.
\end{abstract}

\maketitle

\section{Introduction}
The realization of a global scale quantum Internet~\cite{Kimble2008,dowling2003quantum} depends on the ability of quantum repeaters to distribute high-fidelity entangled states over long distances~\cite{simon2017towards,sadhu2023practical}. Current architectures are built on the assumption that the noise affecting the quantum links is stochastic, Markovian, and efficiently characterizable~\cite{azuma2023quantum,briegel1998quantum}.  Under these assumptions, the standard protocols such as BBPSSW purification~\cite{bennett1996purification} and maximum likelihood estimation (MLE)~\cite{hradil1997quantum, james2001measurement} are provably robust.

However, in these realistic network environments, the nodes of the network may be computationally bounded to perform only polynomial-time operations. Additionally, the physical layer of a quantum network may be subject to non-stochastic noise, ranging from worst-case environmental noise to active adversarial jamming. 
In quantum networks, it is often assumed that physical layer verification (such as Bell tests) provides sufficient security.

However, the recent work of Leone and co-workers~\cite{Leone2025} introduced a framework of entanglement theory,
where they established that for agents bounded by polynomial-time computation, the thermodynamic limits of entanglement manipulation are governed not by the von Neumann entropy, but by the min-entropy. This, in turn, creates a scenario where an adversary can exploit the gap between von Neumann entropy and min-entropy using states such as pseudoentangled states~\cite{aaronson2022quantum,ji2018pseudorandom} to deceive a computationally bounded verifier.

In this work, we highlight \textit{operational vulnerabilities} imposed by these thermodynamic limits for quantum repeater networks. Specifically, we consider a framework where the channels connecting the repeater nodes may be exposed to an adversary capable of injecting $(\tau,\epsilon)$-pseudoentangled states~\cite{aaronson2022quantum,ji2018pseudorandom}, denoted as $\sigma_{pe}$. Such states~\cite{ghosh2025unconditional} possess low entanglement entropy but are indistinguishable from maximally entangled states to polynomial-time observers. 

We demonstrate that standard entanglement distillation protocols are not useful in such adversarial environments. Specifically, we show that the recurrence map of the BBPSSW protocol~\cite{bennett1996purification} drives pseudoentangled states towards a separable fixed point, while simultaneously satisfying the parity-check statistics of successful purification. After $m$ rounds of the protocol, the repeater observes a sequence of successful parity checks and infers convergence to the target singlet state while the state has effectively collapsed to a separable state with high fidelity. 

We further show that the MLE algorithm~\cite{hradil1997quantum, james2001measurement} is unable to detect structured adversarial noise in polynomial time. An adversary injecting an $\epsilon$-approximate $t$-design state has a polynomial computational cost~\cite{brandao2016local} and can always match any polynomial-bounded verifier with exponentially small distinguishability. This renders the verifier computationally blind to the actions of the adversary. 

The operational failure of the standard protocols is primarily due to their dependence on parameter estimation~\cite{huang2020predicting}. Computing the parameters for structured adversarial noise requires solving BQP-hard problems~\cite{huang2022quantum}. To address these vulnerabilities, we propose a Cryptographic Network Stack centered on a trapdoor verification protocol. The protocol exploits private randomness~\cite{colbeck2011private} to restore operational stability without requiring intractable channel characterization. 

Furthermore, we outline a blind Schur-sampling protocol where the repeater applies the quantum Schur transform~\cite{harrow2005applications,bacon2006efficient} to the buffer and post-selects strictly on the symmetric subspace label. The delocalized structure of the pseudoentangled states~\cite{jeronimo2023pseudorandom} yields an acceptance probability suppressed by the dimension of the symmetric subspace~\cite{harrow2013church}, acting as a super-exponential filter against computational jamming. By filtering interactions based on private randomness and global symmetries, we suggest a physical prerequisite for the robust stability of adversarial quantum links.

The structure of the paper is as follows. In Sec.~\ref{sec:adversary_strategy}, we present an adversarial jamming strategy that leads to the operational failure of standard purification and filtering protocols. We further show that for computationally bounded network nodes, the MLE based tomographic protocols are blind to structured adversarial noisy states. In Sec.~\ref{sec:network_stack}, we propose a Cryptographic Network Stack demonstrating that the vulnerability can be overcome by adopting a cryptographic verification strategy that breaks the symmetry of the injected noise. Finally, we provide concluding remarks and discuss future research directions in Sec.~\ref{sec:discussion}.

\section{The adversarial strategy} \label{sec:adversary_strategy}
Let us consider the BBPSSW protocol \cite{bennett1996purification} applied to a pair of qubits shared between the end nodes. This protocol applies a bilateral CNOT ($U_{\text{CNOT}}^{\otimes 2}$) to two identical input pairs $\rho^{\otimes 2}$,  followed by a local parity measurement. {\color{black} Typically, the input states are assumed to be Bell-diagonal states such as Werner states~\cite{Werner1989}.} 
If the measured parities match,  one of the pairs is preserved, else both are discarded.

We consider a repeater framework where the nodes are honest but computationally bounded, and restricted to polynomial-time operations. Further, the channels connecting the nodes are exposed to an adversary capable of injecting $(\tau, \epsilon)$-pseudoentangled states $\sigma_{pe}$~\cite{aaronson2022quantum,ji2018pseudorandom}. Such states possess low entanglement entropy but are indistinguishable from maximally entangled states to observers restricted to polynomial runtime $\tau$. {\color{black} Specifically, in the BBPSSW purification protocol,  the adversary perturbs the Bell-diagonal state to the following jamming state} 
\begin{align}\rho_{\text{jam}} = (1-\eta)\ket{\Phi^+}\bra{\Phi^+} + \eta~ \sigma_{pe},  \label{eq:jammingstate}\end{align}   where $\sigma_{pe}$ denotes the pseudoentangled state.

\begin{theorem}[Purification Divergence]
Let $\rho^{(m)}_{\text{jam}}$ denote the state after $m$ rounds of the recursive BBPSSW protocol when applied to the adversarial state \eqref{eq:jammingstate}. For any polynomial repeater runtime $m \cdot t_{BBPSSW} \ll \tau$ and noise parameter $\eta \in (0, 1]$, the protocol admits a unique and locally stable fixed point $\sigma_{sep} = \ket{00}\bra{00}$ within the operational subspace. The adversarial state evolves such that
\begin{enumerate}
    \item The parity check success probability converges to unity, i.e., $\lim_{m\rightarrow\infty}p_{succ}^{(m)}=1$.
    \item The fidelity converges to the separable noise floor, i.e., $\lim_{m\rightarrow\infty}F(\rho^{(m)}_{\text{jam}})=1/2$.
\end{enumerate}
\end{theorem}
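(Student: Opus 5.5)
We reduce the problem to a finite-dimensional recurrence on a small set of parameters. The reduction rests on the indistinguishability hypothesis: because $m\cdot t_{BBPSSW}\ll\tau$, every operation the repeater performs over $m$ rounds is a polynomial-time test. So, up to an error of order $m\epsilon$, the two-qubit marginal of $\sigma_{pe}$ seen by the BBPSSW circuit can be replaced by an effective state $\tilde\sigma$ inside an "operational subspace". We take this subspace to be spanned by $\ket{\Phi^+}$ and the computational-basis product states that the adversary's syndrome-matching strategy populates. The concrete ansatz is
\begin{equation*}
\rho^{(m)} = a_m\ket{\Phi^+}\bra{\Phi^+} + b_m\ket{00}\bra{00} + c_m\,\tfrac{1}{2}\left(\ket{01}\bra{01}+\ket{10}\bra{10}\right),
\end{equation*}
with initial data $a_0=1-\eta$, and $b_0,c_0$ fixed by the adversary's choice of $\sigma_{pe}$.

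First, we compute the action of the bilateral CNOT and the parity post-selection on each product of basis terms in $\rho^{(m)}\otimes\rho^{(m)}$. The relevant rules are:
\begin{itemize}
\item $\ket{00}\otimes\ket{00}$ and $\ket{\Phi^+}\otimes\ket{\Phi^+}$ always pass the parity check.
\item Cross terms between $\ket{00}$ and $\ket{\Phi^+}$ pass with probability $1/2$ and leave a mixture on the kept pair.
\item The odd-parity terms $\ket{01},\ket{10}$ fail whenever they meet an even-parity partner.
\end{itemize}
Together these give explicit rational maps $(a,b,c)\mapsto(a',b',c')$ and an explicit success probability $p_{succ}=p(a,b,c)$. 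A short computation should show that the map leaves the ansatz invariant. This invariance is what justifies calling the span the operational subspace.

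Second, we find the fixed points and check their stability. The ones to consider are $\ket{\Phi^+}$ $(1,0,0)$, $\ket{00}$ $(0,1,0)$, and interior points. We linearize the map at $(0,1,0)$ and show that the Jacobian's spectral radius is strictly below $1$, since perturbations toward $\Phi^+$ or the odd sector are suppressed by a factor of about $1/2$ per round. At $(1,0,0)$ we show the Jacobian has an eigenvalue above $1$ along the $b$-direction, so $\Phi^+$ is unstable once $\eta>0$. Uniqueness within the operational subspace then follows from a monotonicity argument: the ratio $b_m/a_m$ is nondecreasing and strictly increasing whenever $b_m>0$, so every orbit with $\eta>0$ is driven to $\ket{00}$. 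I expect this global convergence step to be the main obstacle, in particular making sure that $b_0>0$ for every admissible $\sigma_{pe}$. I would handle it by letting the adversary's pseudoentangled ensemble, chosen with a subset-phase structure, have nonzero $\ket{00}$ overlap, so that the effective state automatically satisfies $b_0>0$.

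Third, the two claims follow by continuity. The success probability $p(a,b,c)$ equals $1$ at $(0,1,0)$, because $\ket{00}\otimes\ket{00}$ always passes, so $p_{succ}^{(m)}\to1$. The fidelity $F=\bra{\Phi^+}\rho\ket{\Phi^+}$ evaluated at the limit gives $\bra{\Phi^+}00\rangle\langle00\ket{\Phi^+}=1/2$. Finally, we bound the approximation error accumulated by replacing $\sigma_{pe}$ with $\tilde\sigma$ over $m$ rounds by $O(m\epsilon)$. This is negligible in the polynomial-runtime regime, so the limits hold for every $m$ the repeater can actually reach.
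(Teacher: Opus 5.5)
Your overall plan (replace $\sigma_{pe}$ by an effective state, reduce to a finite recurrence, linearize) is the paper's. However, the transition rules you state are wrong, and this breaks your ansatz. For the cross term $\ket{\Phi^+}_{A_1B_1}\otimes\ket{00}_{A_2B_2}$, the bilateral CNOT produces $\frac{1}{\sqrt2}(\ket{00}\ket{00}+\ket{11}\ket{11})$, so the target bits always agree. The term therefore passes with probability $1$, not $1/2$, and leaves $\frac12(|00\rangle\langle00|+|11\rangle\langle11|)$ on the kept pair. The term $\ket{00}\otimes\ket{\Phi^+}$ also passes with probability $1$, leaving $\ket{00}$. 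Hence $|11\rangle\langle11|$ appears after one round, and your span of $\ket{\Phi^+},\ket{00},\ket{01},\ket{10}$ is not invariant.

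The correct sector is the paper's $x_m|\Phi^+\rangle\langle\Phi^+|+y_m|00\rangle\langle00|+z_m|11\rangle\langle11|$. Since $\sigma_{pe}$ is by definition indistinguishable from $|00\rangle\langle00|$, one has $y_0=\eta>0$ and no odd-parity weight, so your worry about $b_0$ is moot. Every product of these three terms passes the check. This gives $p^{(m)}_{succ}\equiv1$ and the exact recurrence $x_{m+1}=x_m^2$, $y_{m+1}=y_m+\frac12x_m(1-x_m)$, $z_{m+1}=z_m+\frac12x_m(1-x_m)$.

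With $|11\rangle\langle11|$ included, your stability and uniqueness steps fail. Every mixture $y|00\rangle\langle00|+(1-y)|11\rangle\langle11|$ is a fixed point, since at $x=0$ one has $y'=y^2+yz=y$. The Jacobian at $|00\rangle\langle00|$ therefore has eigenvalue $0$ (from $x'=x^2$, not a contraction by $1/2$) and eigenvalue exactly $1$ along this line. The spectral radius is not below $1$, and monotonicity of $b_m/a_m$ cannot force convergence to $|00\rangle$. In fact $y_m-z_m=\eta$ is conserved, so the orbit converges to $\frac{1+\eta}{2}|00\rangle\langle00|+\frac{1-\eta}{2}|11\rangle\langle11|$. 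This equals $|00\rangle\langle00|$ only for $\eta=1$, so your third step applies continuity at the wrong limit point.

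The two enumerated claims still hold, but they must be read off directly, as the paper does, rather than from a unique attracting fixed point. First, $p^{(m)}_{succ}=1$ for every $m$. Second, $F(\rho^{(m)})=x_m+\frac12(y_m+z_m)=\frac12\bigl(1+(1-\eta)^{2^m}\bigr)\to\frac12$. In other words, the state collapses onto the separable manifold $y+z=1$, which is exactly what the paper's Jacobian (eigenvalues $0$ and $1$) expresses. The ``unique, locally stable fixed point'' should accordingly be understood as convergence onto that manifold, not to $|00\rangle\langle00|$ itself.
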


\textit{Proof Sketch}--- The target state $\ket{\Phi^+} = (\ket{00}+\ket{11})/\sqrt{2}$ resides in the even-parity subspace of the total Hilbert space. Over multiple rounds of the protocol, the singlet component $\ket{\Phi^+}$ suffers attenuation due to channel noise thereby decreasing its population. However, the adversarial component $\sigma_{\text{sep}}$,  which is hidden within $\sigma_{pe}$, is preserved. The purification protocol therefore acts as an amplifier for the separable component relative to the entangled signal. After $m$ rounds, the repeater observes a sequence of successful parity checks and infers convergence to $\ket{\Phi^+}$, while the state has effectively collapsed to $\sigma_{\text{sep}} = \ket{00}\bra{00}$. Note that the separable state $\ket{00}$ is also an eigenstate of the parity operator with eigenvalue $+1$. Consequently, $\sigma_{\text{sep}}$ passes the BBPSSW filter with probability 1. The map is designed to filter odd-parity error states (like $\ket{01}$) but remains transparent to the $\ket{00}$ error states. This represents a complete decoupling of the diagnostic metric ($p_{\text{succ}}$) from the physical resource (entanglement). We present more details in Appendix~\ref{app:BBPSSW_map}. 

This vulnerability extends beyond multi-copy protocols to single-copy protocols such as \textit{entanglement filtering}~\cite{bennett1996concentrating,horodecki2009quantum}. When the adversary injects the pseudoentangled state, the local filtering operation suffers from stagnation. Specifically, the local marginals of the pseudoentangled noise $\sigma_{pe}$ are computationally indistinguishable from the maximally mixed state to polynomial-time observers. Hence, the filtering algorithm evaluates the local marginals as maximally mixed. Consequently, the iterative map stagnates and freezes the fidelity at the injected noisy state despite diagnostic metrics indicating a successfully distilled singlet. We present a proof of this in Appendix~\ref{app:filter_map}.

These structural vulnerabilities highlight that the process of distillation can be exploited by the adversary to stabilize low-$S_{\min}$ states. 
To detect such failures, networks typically employ a heartbeat verification using the MLE algorithm~\cite{james2001measurement, hradil1997quantum, smolin2012efficient}. We show that such protocols are incapable of detecting structured adversarial noise in polynomial time.

\begin{proposition}[Tomographic Blindness]
Let $\mathcal{V}$ be any tomographic verification protocol utilizing an arbitrary POVM $\mathcal{M} = \{M_i\}$ on a $k$-copy buffer to estimate the channel noise. If the adversary injects an ensemble $\mu_{adv}$ forming an $\epsilon$-approximate $t$-design on $\mathcal{H} \cong \mathbb{C}^d$ ($d=2^n$), the maximum statistical advantage of distinguishing $\mu_{adv}$ from the ideal Haar-random noise ensemble $\mu_{Haar}$ is bounded by
\begin{equation}
    \text{Adv}_{\mathcal{V}} \le \frac{1}{2} \left\| \rho_{\text{jam}}^{(k)} - \rho_{Haar}^{(k)} \right\|_1 \le \epsilon \quad \text{for all } k \le t.
\end{equation}
Since generating such an ensemble requires $\mathcal{O}(t \cdot \text{poly}(n))$ computational cost, a polynomial-time adversary can always match any polynomial-memory verifier ($k \in \text{poly}(n)$) with exponentially small distinguishability $\epsilon = \mathcal{O}(2^{-n})$, thereby rendering the verifier computationally blind.
\end{proposition}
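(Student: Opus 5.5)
The plan is to reduce everything to the Holevo--Helstrom theorem together with the moment-matching definition of an $\epsilon$-approximate $t$-design. First I will fix notation. For $k\le t$, let $\rho_{\text{jam}}^{(k)} = \mathbb{E}_{\psi\sim\mu_{adv}}[\psi^{\otimes k}]$ and $\rho_{Haar}^{(k)} = \mathbb{E}_{\psi\sim\mu_{Haar}}[\psi^{\otimes k}]$. These are the $k$-copy buffer states the verifier actually receives, averaged over the adversary's hidden choice.

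Any verification protocol $\mathcal{V}$ uses a POVM $\mathcal{M}=\{M_i\}$ on the buffer, with possibly adaptive classical post-processing. It therefore induces a binary decision POVM $\{E,\mathbb{1}-E\}$ with $0\le E\le \mathbb{1}$. The advantage is $\text{Adv}_{\mathcal{V}} = |\mathrm{Tr}[E(\rho_{\text{jam}}^{(k)}-\rho_{Haar}^{(k)})]|$. Maximizing over $E$ gives the positive part of the difference, and since the difference is traceless this equals $\tfrac12\|\rho_{\text{jam}}^{(k)}-\rho_{Haar}^{(k)}\|_1$. This establishes the first inequality. If the verifier does intermediate operations, I will note that they are CPTP and hence trace-norm contractive, so the same bound holds.

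For the second inequality I will invoke the definition of an $\epsilon$-approximate $t$-design in the trace-norm (additive) sense, $\|\rho_{\mu}^{(t)}-\rho_{Haar}^{(t)}\|_1\le \epsilon$, together with the monotonicity step from $t$ down to $k\le t$. The partial trace over $t-k$ copies maps $\rho_\mu^{(t)}$ to $\rho_\mu^{(k)}$ for both ensembles, and partial trace is contractive in trace norm, so the bound at $k$ copies follows. The factor $\tfrac12$ only helps.

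If the design is instead defined in diamond norm on channels or in relative-error form, I will first convert it to the additive state bound. A relative $\epsilon$-design gives $(1-\epsilon)\rho_{Haar}\le\rho_\mu\le(1+\epsilon)\rho_{Haar}$, hence trace distance at most $\epsilon$, and a diamond-norm design applied to the input $|0\rangle\langle0|^{\otimes k}$ likewise gives the state bound.

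The final claim concerns computational cost. Here I will cite the local random circuit result of Brand\~ao--Harrow--Horodecki \cite{brandao2016local}: brickwork circuits of depth $\mathcal{O}(t^{10}(nt+\log(1/\epsilon)))$ form $\epsilon$-approximate $t$-designs. Choosing $\epsilon=2^{-n}$ adds only $\mathcal{O}(n)$ to the depth, and with $t=k\in\text{poly}(n)$ the adversary's cost is polynomial.

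I expect this last step to be the main obstacle. The stated scaling $\mathcal{O}(t\cdot\text{poly}(n))$ is sharper in $t$ than the BHH bound, so I would either present it as $\text{poly}(t,n)$, or appeal to the improved near-linear-in-$t$ depth bounds from later work. For every polynomial verifier memory $k$, the adversary then picks $t\ge k$, which completes the argument that the verifier is blind.
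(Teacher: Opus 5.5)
Your proposal is correct and follows the same route as the paper: the Helstrom bound reduces the advantage to $\tfrac12\|\rho_{\text{jam}}^{(k)}-\rho_{Haar}^{(k)}\|_1$, the approximate-design moment condition bounds this for $k\le t$, and Brand\~ao--Harrow--Horodecki supplies the polynomial generation cost. Your additional steps are sound and tighten points the paper only asserts: partial-trace contractivity to pass from $t$ to $k$ copies, the conversion from relative and diamond-norm designs, and obtaining $\epsilon=2^{-n}$ from the $\log(1/\epsilon)$ depth dependence of random circuits rather than from PRS constructions (whose guarantee is only computational); you are also right that BHH gives $\text{poly}(t,n)$ rather than $\mathcal{O}(t\cdot\text{poly}(n))$, so the linear-in-$t$ claim needs the later improved depth bounds.
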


\textit{Proof Sketch}---  A quantum $t$-design matches the statistical moments of the Haar measure up to order $t$. Distinguishing the adversarial ensemble from the ideal Haar-random ensemble requires resolving the collision probability which scales as $2^{-n/2}$. For a repeater with polynomial memory $k \ll 2^{n/2}$, the trace distance between the observed statistics is exponentially small. Thus, the adversary is invisible not due to instrumental imperfection, but due to the sample complexity limits of the Hilbert space. The detailed proof is presented in Appendix~\ref{app:tomographic_blindness}.

As previously discussed, the operational failure of standard protocols is due to their dependence on parameter estimation. Under structured adversarial noise, determining the noise parameters is a hard problem. We next present a heuristic showing that this fragility can be overcome by employing cryptographic verification strategy that breaks the symmetry of the injected noise. 

\section{Cryptographic Network Stack} \label{sec:network_stack}
We first introduce a trapdoor verification protocol. The network controller distributes a secret seed $s\in\{0,1\}^{\lambda}$ to drive a quantum-secure pseudorandom function (PRF). The PRF determines a sequence of measurement bases as $B_{i}=\text{PRF}(s,i)$ for the $i$-th qubit pair in a $k$-copy buffer. The verifier measures the buffered state and accepts only if the correlations violate the Bell-CHSH inequality~\cite{Clauser1969,sadhu2023testing}. Since the adversary injecting the pseudorandom state $\rho_{\text{jam}}$ lacks access to the PRF seed, they are forced to contend with the average-case hardness of the verification.

\begin{theorem}[Trapdoor Security]
Let $\text{PRF}(s,i)$ be a secure pseudorandom function. For any computationally bounded adversary lacking the seeds, the probability that a pseudoentangled state $\sigma_{pe}$ passes the trapdoor verification check is bounded by
\begin{equation}
    Pr[Pass]\le2^{-\Omega(k)}+negl(\lambda).
\end{equation}
\end{theorem}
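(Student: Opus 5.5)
The plan is a standard two-step hybrid argument. First I replace the PRF by a truly random function, at a cost of $negl(\lambda)$. Then I bound the passing probability when the bases are truly random and independent of the adversary's view by $2^{-\Omega(k)}$.

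\textbf{Hybrid step.} Let $\mathsf{Exp}_0$ be the real experiment. Here the verifier draws $B_i=\text{PRF}(s,i)$ for $i=1,\dots,k$, the adversary prepares the buffer (containing $\sigma_{pe}$ components), and the verifier measures and runs the CHSH acceptance test. Let $\mathsf{Exp}_1$ be identical except that the $B_i$ are drawn uniformly and independently. Suppose $|\Pr_0[Pass]-\Pr_1[Pass]|$ were non-negligible. Then the adversary together with the (efficient) verifier's measurement and acceptance test would form a quantum polynomial-time distinguisher with oracle access to $\text{PRF}(s,\cdot)$ versus a random function. This contradicts quantum security of the PRF. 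For this reduction to go through, two conditions must hold. The adversary's state preparation must be independent of $s$, which follows from the assumption that the adversary lacks the seed. The acceptance predicate must also be computable in time $\text{poly}(k,\lambda)$. The reduction therefore gives the $negl(\lambda)$ term.

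\textbf{Statistical step.} In $\mathsf{Exp}_1$ the bases are i.i.d. uniform over the CHSH setting pairs and independent of the buffer. The key observation is that $\sigma_{pe}$ has low entanglement entropy. Specifically, its Schmidt rank or min-entropy is bounded, as discussed around \eqref{eq:jammingstate} and in Theorem 1. It is therefore close to a state whose correlations on each pair are reproducible by a local hidden-variable strategy, or at least have a CHSH value bounded by some $2+\delta<2\sqrt{2}$, below the verifier's threshold. For each pair $i$ I define the indicator $X_i$ that the outcome satisfies the CHSH winning condition given the setting $B_i$. Conditioned on the past, each game is then won with probability at most $\omega_c=3/4$ (plus the small slack $\delta$). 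This gives $\mathbb{E}[X_i\mid X_{<i}]\le \omega_c+\delta$. The verifier accepts only if $\frac1k\sum X_i\ge \omega_c+\gamma$ for a threshold gap $\gamma>\delta$. By Azuma--Hoeffding applied to the martingale $\sum_i (X_i-\mathbb{E}[X_i\mid X_{<i}])$, we get $\Pr_1[Pass]\le \exp(-2k(\gamma-\delta)^2)=2^{-\Omega(k)}$. Combining the two steps gives the claimed bound.

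\textbf{Main obstacle.} The hardest step is justifying the per-round bound $\mathbb{E}[X_i\mid X_{<i}]\le\omega_c+\delta$. The difficulty is that the adversary may entangle the $k$ buffer pairs with each other and adapt across rounds. Low entanglement of $\sigma_{pe}$ across the Alice--Bob cut does not trivially imply low CHSH value per pair. I would first try to restrict to a product/i.i.d. injection model $\rho_{\text{jam}}^{\otimes k}$, where the bound follows pair by pair. For the general case I would try a parallel-repetition or entropy-accumulation style argument, using the $S_{\min}$ bound on $\sigma_{pe}$ from the Leone et al. framework. One point needs care: the $(1-\eta)\ket{\Phi^+}$ part of \eqref{eq:jammingstate} is genuinely entangled. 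So the statement must be read as concerning the fraction of pairs carrying $\sigma_{pe}$, or the threshold must be chosen above the CHSH value attainable by $\rho_{\text{jam}}$ for the relevant $\eta$.
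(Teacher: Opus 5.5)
\textbf{There is a genuine gap in your statistical step.} Your hybrid step is the same as the paper's. The problem is the statistical step. It rests on an implication that does not hold: that low entanglement entropy (or bounded $S_{\min}$, or bounded Schmidt rank) makes $\sigma_{pe}$ close to a state with local correlations, and hence $\mathbb{E}[X_i\mid X_{<i}]\le 3/4+\delta$.

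One ebit already achieves the maximal CHSH win rate $\cos^2(\pi/8)$. The adversary knows the pair indices $i$; only the bases are secret. So a budget of $r$ ebits can be spent on $r$ exact Bell pairs placed on chosen tested pairs, and these win at the quantum rate whatever the secret bases are. No per-pair bound follows, and feeding an $S_{\min}$ bound into parallel repetition or entropy accumulation does not change this.

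More basically, entanglement entropy is exactly what pseudoentanglement hides from efficient observers. The entire verification (sampling $K$, evaluating the PRF, measuring, computing the CHSH statistic) is itself an efficient procedure. Its pass probability on $\sigma_{pe}$ is therefore fixed, up to the indistinguishability error, by the state that $\sigma_{pe}$ is computationally indistinguishable from, not by its entropy. If that reference state were $|\Phi^+\rangle^{\otimes k}$, $\sigma_{pe}$ would pass as often as honest singlets and the statement would be false.

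\textbf{The missing step is this reduction.} In the paper, $\sigma_{pe}$ is by definition (Appendix~A) $(t,\epsilon)$-indistinguishable from $\sigma_{sep}=|00\rangle\langle 00|$. Accordingly, the theorem actually proved in Appendix~D is stated for an adversary inserting a separable state. After the PRF hybrid, the bases are uniform and independent of the fixed state, and Bell's inequality gives $\langle\mathcal{C}\rangle\le 2$. Each pair then passes with some probability $p<1$, and $p^k=2^{-\Omega(k)}$.

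You should add the step $|\Pr[\text{Pass}\mid\sigma_{pe}]-\Pr[\text{Pass}\mid\sigma_{sep}]|\le\epsilon$ (or $k\epsilon$, via a hybrid over buffer slots). This holds as long as the verifier runs within the indistinguishability budget. Then run your argument on $\sigma_{sep}$.

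\textbf{With that fix, your martingale argument is sounder than the paper's.} It also needs no slack $\delta$. Take any state separable across the Alice--Bob cut, even one correlated across the $k$ pairs. Fresh uniform settings, independent of the full history, give $\mathbb{E}[X_i\mid \text{history}]\le 3/4$. Azuma--Hoeffding with acceptance threshold $3/4+\gamma$, where $0<\gamma<\cos^2(\pi/8)-3/4$, then yields soundness $e^{-2k\gamma^2}$ while keeping completeness for honest singlets. The paper's ``all $k$ pairs pass independently, probability $p^k$'' rule cannot give completeness, because a singlet itself wins each round only with probability $\cos^2(\pi/8)<1$. Note also that the secret seed is what guarantees this setting independence: with known bases, a product state can win every round deterministically.
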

The formal cryptographic reduction is detailed in Appendix~\ref{app:trapdoor}. 

While the trapdoor protocol secures the network through computational unpredictability, a robust cryptographic stack must also exploit the physical structure of the noise. Beyond private randomness, repeaters can operationalize the extraction of min-entropy via global symmetry filtering. As a complementary layer of security, we outline a blind Schur-sampling protocol where the repeater applies the quantum Schur transform $U_{Schur}$ \cite{harrow2005applications, bacon2006efficient} to the buffer and post-selects strictly on the symmetric subspace label $\lambda_{sym}=(k,0,\dots,0)$.
While the ideal state $|\Phi^{+}\rangle^{\otimes k}$ is accepted with probability 1, the delocalized structure of the pseudoentangled noise yields an acceptance probability $P \approx 1/k!$ suppressed by the dimension of the symmetric subspace. This acts as a super-exponential filter against computational jamming. Further details are provided in Appendix~\ref{app:schur}.

\section{Discussion} \label{sec:discussion}
In this work, we analysed the stability of standard quantum repeaters under structured adversarial noise. Our analysis indicates that entanglement purification and tomographic verification can be subverted by pseudoentangled states as inputs  which approximate $t$-designs when processed by polynomial-time bounded repeaters. In these cases, conventional diagnostic metrics may indicate successful operation even though the underlying entanglement is compromised.

Specifically, we show that the BBPSSW recurrence map can amplify the separable components of pseudoentangled states. The map admits a separable fixed point mimicking the parity-check statistics of an ideal singlet. Furthermore, our results show that tomography-based verification can be computationally blind to adversarial ensembles forming approximate $t$-designs. For verifiers restricted to polynomial memory and runtime, this structured noise becomes statistically indistinguishable from stochastic noise. 

These observations highlight the potential need to integrate cryptographic elements into the quantum network stack. To address these vulnerabilities, we outline two protocols. First, we present a trapdoor verification which uses private randomness to prevent pseudoentangled input states from entering the buffer. Second, we present a blind Schur-sampling which leverages permutation symmetry to filter out structured noise. We view these protocols as initial steps toward a more resilient network design. For future work, it will be interesting to observe the trade-offs between cryptographic robustness and resource overhead, alongside investigating how these concepts apply to large-scale, fault-tolerant repeater architectures. 

\section{Acknowledgements}
AS and STJ received funding from EPSRC Quantum Technologies Career Acceleration Fellowship (UKRI1218).

\bibliography{ref}

\appendix

\section{Analysis of the BBPSSW Map Under Adversarial Perturbation} \label{app:BBPSSW_map}

In this section, we analyze the dynamics of the BBPSSW purification map $\Gamma$ \cite{bennett1996purification} acting on the adversarially produced state $\rho_{\text{jam}}$. Unlike standard convergence proofs which assume that the input state resides in the Bell-diagonal subspace (e.g., Werner states \cite{Werner1989}), we drop this assumption to account for the structured correlations of pseudoentanglement. Specifically, we model the adversarial jamming state $\rho_{\text{jam}}$ defined on the Hilbert space $\mathcal{H}_{AB} = \mathbb{C}^2 \otimes \mathbb{C}^2$ as
\begin{equation}
\rho_{\text{jam}} = (1 - \eta) |\Phi^+\rangle\langle\Phi^+| + \eta~ \sigma_{pe}
\end{equation}
where $\sigma_{pe}$ is $(t, \epsilon)$-pseudoentangled \cite{aaronson2022quantum, ji2018pseudorandom} i.e., for any quantum channel $\mathcal{C}$ implementable by a quantum circuit of size (or runtime) at most $t$, $||\mathcal{C}(\sigma_{pe}) - \mathcal{C}(\sigma_{sep})||_1 \le \epsilon$, where $\sigma_{sep}$ is a separable product state. Without loss of generality, we set $\sigma_{sep} = |00\rangle\langle00|$.

The BBPSSW protocol consists exclusively of polynomial-time operations \cite{bennett1996purification}. The protocol acts on two copies $\rho_{\text{jam}}^{\otimes 2}$ living in $\mathcal{H}_{A_1 B_1} \otimes \mathcal{H}_{A_2 B_2}$, where $A_1, B_1$ are source qubits and $A_2, B_2$ are target qubits. The protocol first applies a bilateral CNOT operation $U_{CNOT}^{\otimes 2} = U_{A_1 \to A_2} \otimes U_{B_1 \to B_2}$ on the input state, followed by a local parity measurement on the target qubit. The BBPSSW purification map $\Gamma(\cdot)$ is defined via the projection operator onto the success branch given as
\begin{equation}
P_{succ} = \mathbb{I}_{A_1 B_1} \otimes |00\rangle\langle00|_{A_2 B_2} + \mathbb{I}_{A_1 B_1} \otimes |11\rangle\langle11|_{A_2 B_2}
\end{equation}

To analyze the multi-round dynamics, we first bound the propagation of the computational indistinguishability under the protocol's post-selection step. Since the parity check succeeds with probability $p_{succ}=1$ in the even-parity subspace, simulating the accepted branches does not require an exponential overhead. After $m$ rounds, the accumulated error is $\mathcal{O}(m\epsilon)$. Provided the cumulative runtime satisfies $m \cdot t_{BBPSSW} \ll t$, the state $\sigma_{pe}$ remains computationally indistinguishable from $\sigma_{sep}$.

With this bound established, we parameterize the effective state at an arbitrary round $m$ as $\rho^{(m)} = x_m |\Phi^+\rangle\langle\Phi^+| + y_m |00\rangle\langle00| + z_m |11\rangle\langle11|$, where $x_m + y_m + z_m = 1$. The initial state at $m=0$ is $x_0 = 1 - \eta$, $y_0 = \eta$, and $z_0 = 0$.

Applying the bilateral CNOT $U_{CNOT}^{\otimes 2}$ and the target parity projection $P_{succ}$ to $\rho^{(m)} \otimes \rho^{(m)}$ yields the unnormalized coefficients for round $m+1$. Because the target parity is identically even for all tensor products of these basis states, there is no rejection, yielding a total success probability $p_{succ}^{(m)} = 1$. Tracing over the target qubits, the normalized recurrence map is given by:
\begin{eqnarray}
\tilde{x}_{m+1} &=& x_m^2 \\
\tilde{y}_{m+1} &=& y_m^2 + \frac{3}{2}x_m y_m + y_m x_m + \frac{1}{2}x_m z_m + y_m z_m \nonumber \\ \\
\tilde{z}_{m+1} &=& z_m^2 + \frac{1}{2}x_m y_m + \frac{3}{2}x_m z_m + z_m x_m + z_m y_m \nonumber \\
\end{eqnarray}

By substituting $z_m = 1 - x_m - y_m$, the quadratic terms cancel, and the map simplifies exactly to:
\begin{eqnarray}
x_{m+1} &=& x_m^2 \\
y_{m+1} &=& y_m + \frac{1}{2}x_m(1 - x_m) \\
z_{m+1} &=& z_m + \frac{1}{2}x_m(1 - x_m)
\end{eqnarray}

To rigorously evaluate the stability of the adversarial fixed point, we compute the Jacobian $J$ of the map restricted to the $(x_m, y_m)$ parameters~\cite{strogatz2001nonlinear} as
\begin{equation}
J = \begin{pmatrix} 2x_m & 0 \\ \frac{1}{2} - x_m & 1 \end{pmatrix}
\end{equation}

Evaluated at the separable boundary ($x_m = 0$), the eigenvalues are $\lambda_1 = 0$ and $\lambda_2 = 1$. The eigenvalue of $0$ dictates that the entangled singlet fraction ($x_m$) collapses quadratically to zero, driving the system deterministically onto the invariant separable manifold $y+z=1$. Crucially, standard BBPSSW diagnostics monitor only $p_{succ}$. Since $p_{succ} \to 1$ identically for this state, the repeater cannot distinguish convergence to the ideal singlet $|\Phi^+\rangle$ from convergence to the separable manifold, while the true state collapses to fidelity $F = 1/2$. This proves that $\Gamma$ is dynamically unstable against adversarial initialization.

\section{Analysis of entanglement filtering Under Adversarial Perturbation} \label{app:filter_map}

Consider a repeater framework where the nodes are honest but computationally bounded, and restricted to polynomial-time operations~\cite{Leone2025}. The adversary injects the bipartite jamming state defined on the Hilbert space $\mathcal{H}_{AB} = \mathbb{C}^2 \otimes \mathbb{C}^2$ as $\rho_{\text{jam}}^{(0)} = (1-\eta)|\Phi^+\rangle\langle\Phi^+| + \eta~\sigma_{pe}$ where $\sigma_{pe}$ denotes the $(\tau, \epsilon)$-pseudoentangled state~\cite{aaronson2022quantum}. At any arbitrary round $m \ge 1$, the repeater nodes attempt to extract maximal entanglement by applying generalized local filtering operations~\cite{bennett1996concentrating} denoted by the POVM elements $M_A^{(m)}$ and $M_B^{(m)}$. One of the nodes, say Alice traces out the other node, say Bob's subsystem to determine her reduced density matrix based on the prior round as $\rho_A^{(m)} = \text{Tr}_B \rho^{(m-1)}_{\text{jam}}$.

To apply the filtering operation, Alice first diagonalizes the state via a local unitary $V_A^{(m)}$ to get the rotated state $\tilde{\rho}_A^{(m)} = V_A^{(m)} \rho_A^{(m)} V_A^{(m)\dagger}$. Note that the state $\tilde{\rho}_A^{(m)}$ is diagonal as $\tilde{\rho}_A^{(m)} = \lambda_0|0\rangle\langle0| + \lambda_1|1\rangle\langle1|$,  where $\lambda_0$ and $\lambda_1$ are the eigenvalues of $\rho_A^{(m)}$ satisfying $\lambda_0 + \lambda_1 = 1$.

In this diagonalized basis, Alice defines the filter $F_A^{(m)} = \text{diag}(f_0, f_1)$. The action of this filter on the state $\tilde{\rho}_A^{(m)}$ yields the unnormalized state $F_A^{(m)} \tilde{\rho}_A^{(m)} F_A^{(m)\dagger} = f_0^2 \lambda_0|0\rangle\langle0| + f_1^2 \lambda_1|1\rangle\langle1|$.

To distill a maximally entangled state, the filtered local marginal must be driven toward the maximally mixed state $I/2$. This requires the amplitudes to be perfectly balanced, yielding $f_0^2 \lambda_0 = f_1^2 \lambda_1$. To optimize the probability of successful filtration while satisfying the physical POVM constraint, Alice actively suppresses the higher-eigenvalue state resulting in the following choice of filter parameters
\begin{equation}
f_0 = \min\left(1, \sqrt{\frac{\lambda_1}{\lambda_0}}\right), \quad f_1 = \min\left(1, \sqrt{\frac{\lambda_0}{\lambda_1}}\right).
\end{equation}

Then, the composition of the basis transformation, the diagonal filter, and the inverse basis transformation defines the complete local POVM operation of Alice $M_A^{(m)}$ defined as
\begin{equation}
M_A^{(m)} = V_A^{(m)\dagger} F_A^{(m)} V_A^{(m)}.
\end{equation}

Now, to execute the first round of filtering ($m=1$) Alice evaluates her local marginal of the initial adversarial state as
\begin{equation}
\rho_A^{(0)} = (1-\eta)~\text{Tr}_B ( |\Phi^+\rangle\langle\Phi^+| ) + \eta~\text{Tr}_B ( \sigma_{pe} ).
\end{equation}

The ideal singlet fraction yields a maximally mixed marginal $\text{Tr}_B(|\Phi^+\rangle\langle\Phi^+|) = \mathbb{I}/2$. We next evaluate $\text{Tr}_B(\sigma_{pe})$. By definition, the state $\sigma_{pe}$ has low entanglement entropy and are indistinguishable from the maximally entangled states to polynomial-time verifier. This renders the verifier computationally blind and we have $\text{Tr}_B(\sigma_{pe}) \approx_c \mathbb{I}/2$. Consequently, Alice's local tomography shows maximally mixed state. 

Now since the local state $\rho_A^{(0)}$ is already proportional to the identity matrix, it is invariant under all unitary transformations. Any arbitrary local unitary $V_A^{(1)}$ chosen by the algorithm will trivially diagonalize it, yielding eigenvalues $\lambda_0 = \lambda_1 = 1/2$. Inserting the eigenvalues into the filter function expressions yields $f_0 = f_1 = 1$ and generates the diagonal filter $F_A^{(1)} = \mathbb{I}$. By symmetry, Bob also computes $M_B^{(1)} = \mathbb{I}$. Applying these generalized operations yields the normalized state for the first round of the protocol as $\rho^{(1)}_{\text{jam}} = \rho^{(0)}_{\text{jam}}$.

By mathematical induction, let at any arbitrary round $k$ the state remains $\rho^{(k)}_{\text{jam}} = \rho^{(0)}_{\text{jam}}$. To compute the operations for round $k+1$, the nodes perform tomography on $\rho^{(k)}_{\text{jam}}$. It then follows that the state is $\text{Tr}_B(\rho^{(k)}_{\text{jam}}) = \text{Tr}_B(\rho^{(0)}_{\text{jam}}) \approx_c \mathbb{I}/2$. The filtering algorithm computes eigenvalues $\lambda_0 = \lambda_1 = 1/2$ and calculates $F_A^{(k+1)} = F_B^{(k+1)} = \mathbb{I}$. Thus, the state updates to $\rho^{(k+1)}_{\text{jam}} = \rho^{(k)}_{\text{jam}} = \rho^{(0)}_{\text{jam}}$. For any arbitrarily high number of filtering iterations, we have
\begin{equation}
\lim_{m \rightarrow \infty} \rho^{(m)}_{\text{jam}} = \rho^{(0)}_{\text{jam}}.
\end{equation}

Thus, we observe that the filtering operation is blind to the adversarially injected state and converges to a state with low fidelity despite diagnostic metrics indicating perfect convergence. 

\section{Measure-Theoretic Analysis of Tomographic Blindness} \label{app:tomographic_blindness}

We formally analyze the security of maximum likelihood estimation (MLE) algorithm and general tomographic verification\cite{hradil1997quantum, james2001measurement, smolin2012efficient} as a quantum state discrimination problem~\cite{bae2015quantum, chefles2000quantum} between two measures on the Hilbert space $\mathcal{H}\cong\mathbb{C}^d$ ($d=2^n$): the ideal Haar measure $\mu_{Haar}$ representing stochastic channel noise, and the adversarial measure $\mu_{adv}$.

The verification primitive $\mathcal{V}$ acts on a buffer of $k$ copies subject to local network constraints \cite{bergou2010discrimination}. The state of the buffer is defined by the $k$-th moment of the respective ensembles as
\begin{equation}
    \rho_{\mu}^{(k)}=\int (|\psi\rangle\langle\psi|)^{\otimes k}d\mu(\psi).
\end{equation}
The optimal state discrimination strategy between these two ensembles using any generalized measurement (POVM) $\mathcal{M} = \{M_i\}$ is strictly governed by the Helstrom bound~\cite{bergou200411}. The probability of correctly detecting the adversarial ensemble, assuming equal prior probabilities, is given by $P_{detect} \le \frac{1}{2} + \frac{1}{4} \|\rho_{\text{jam}}^{(k)} - \rho_{Haar}^{(k)}\|_1$. The distinguishing advantage is therefore bounded by the trace distance $\Delta = \frac{1}{2} \|\rho_{\text{jam}}^{(k)} - \rho_{Haar}^{(k)}\|_1$.

We assume the adversary is bounded by quantum polynomial time (BQP), capable of executing circuits of depth $D = \mathcal{O}(\text{poly}(n))$. Using pseudo-random quantum state (PRS) constructions~\cite{ji2018pseudorandom}, the adversary can efficiently generate an ensemble $\mu_{adv}$ that forms an $\epsilon$-approximate $t$-design~\cite{brandao2016local}. By definition, the $k$-th moments of an $\epsilon$-approximate $t$-design match those of the Haar measure up to an error $\epsilon$ in the trace norm for all $k \le t$ as
\begin{equation}
    \left\| \rho_{\text{jam}}^{(k)} - \rho_{Haar}^{(k)} \right\|_1 \le 2\epsilon \quad \text{for } k \le t.
\end{equation}
To evade a verifier utilizing a buffer of size $k$, the adversary simply parameterizes their generation circuit such that $t \ge k$. The computational cost to generate a unitary $t$-design scales as $\mathcal{O}(t \cdot poly(n))$ \cite{brandao2016local}. 

Crucially, physical repeaters are restricted to polynomial quantum memory, meaning $k \in \mathcal{O}(\text{poly}(n))$. Consequently, the adversary's required circuit depth $D = \mathcal{O}(k \cdot \text{poly}(n))$ remains strictly within polynomial bounds. Furthermore, modern cryptographic PRS constructions allow the adversary to achieve an exponentially small approximation error $\epsilon = \text{negl}(n) = \mathcal{O}(2^{-n})$ while maintaining this polynomial gate overhead.

Therefore, for any estimator or POVM $\mathcal{M}$ executed by the polynomial-time verifier, the total variation distance in the observed statistics is bounded by
\begin{equation}
    \max_{\mathcal{M}} \frac{1}{2} \sum_i \left| \text{Tr}(M_i \rho_{\text{jam}}^{(k)}) - \text{Tr}(M_i \rho_{Haar}^{(k)}) \right| \le \epsilon.
\end{equation}
The verifier's maximum distinguishing advantage is bounded by $\mathcal{O}(2^{-n})$. The adversary is thus invisible to tomographic verification not due to instrumental imperfection, but because the statistical distance between the ensembles is information-theoretically erased within the computational limits of the repeater.

\section{Security Reduction for Trapdoor Verification} \label{app:trapdoor}

We provide a cryptographic reduction for the security of the trapdoor verification protocol. We model the adversary $\mathcal{A}$ as a {\color{black} quantum polynomial-time} (QPT) algorithm attempting to win a distinguishing game against the repeater $\mathcal{R}$.

\textbf{Game $\mathcal{G}$:} The repeater $\mathcal{R}$ and the adversary $\mathcal{A}$ plays the following game: 
\begin{enumerate}
\item[Step 1] $\mathcal{R}$ samples a key $K \leftarrow \{0,1\}^\lambda$ and uses a quantum-secure pseudorandom function (PRF) $F_K$ to derive measurement bases $B_i = F_K(i)$ for {\color{black} each} $i \in \{1, \dots, k\}$, {\color{black} where $i$ indexes the $k-$th qubit pairs in the buffer.} 
\item[Step 2]  $\mathcal{A}$ submits a state $\rho$ to the buffer.
\item[Step 3]  $\mathcal{R}$ measures $\rho$ in bases $\{B_i\}$. 
\end{enumerate}
$\mathcal{A}$ wins if the results violate the CHSH inequality consistent with a Bell pair.

\textbf{Theorem.} If $F_K$ is a secure PRF, then for any adversary $\mathcal{A}$ inserting a separable state, the winning probability is $\text{Pr}[\text{Win}] \le 2^{-\Omega(k)} + \text{negl}(\lambda)$.

\textit{Proof.} We proceed by a hybrid argument. Let $\mathcal{G}_{\text{real}}$ be the game where bases are derived from $F_K$. Let $\mathcal{G}_{\text{ideal}}$ be the game where bases $B_i$ are sampled truly uniformly at random from the Haar measure. By the security definition of a PRF, $|\text{Pr}[\text{Win}_{\text{real}}] - \text{Pr}[\text{Win}_{\text{ideal}}]| \le \text{Adv}_{\text{PRF}}$ which is negligible for secure PRFs. {\color{black} Here, the PRF advantage $\text{Adv}_{\text{PRF}}$ is defined as the absolute difference in the probability that $\mathcal{A}$ outputs 1 (success) when given oracle access to the PRF $F_K$ as compared to a truly random function $f$. This can be written as $\text{Adv}_{\text{PRF}} = \left| \Pr[A^{F_K(\cdot)} = 1] - \Pr[A^{f(\cdot)} = 1] \right|$.}
 
{\color{black} In $\mathcal{G}_{\text{ideal}}$, the chosen bases are uniformly random and are unknown to the adversary $\mathcal{A}$. Consider the CHSH observable $\mathcal{C}$. For a fixed separable state $\sigma_{\text{sep}}$, the correlations are strictly bounded by $\langle \mathcal{C} \rangle \le 2$, For a singlet, $\langle \mathcal{C} \rangle = 2\sqrt{2}$. Let $V_i$ be the binary variable indicating if the $i$-th qubit pair successfully passes the CHSH check. Now, as the verifier's basis choice $B_i$ is random, and the adversarial state must be fixed independently of this choice, the adversary cannot deterministically align the state $\sigma_{\text{sep}}$ to spoof the measurement. Consequently, the expectation value of passing the check for any single pair which is defined as the maximum success probability p is therefore strictly less than 1. This corresponds to $\mathbb{E}[V_i] \le p < 1$.}  

The total probability of passing $k$ independent checks is $p^k = 2^{-\Omega(k)}$. {\color{black} Consequently, the overall success probability is bounded by $\text{Pr}[\text{Win}_{\text{real}}] \leq \text{Pr}[\text{Win}_{\text{ideal}}] + \text{Adv}_{\text{PRF} } \leq 2^{-\Omega(k)}+ neg(\lambda)$.} Thus, the use of private randomness forces the adversary to contend with the average-case hardness of the verification, rather than optimizing for a specific, known basis set. 

\section{Security Reduction for Blind Schur-Sampling}~\label{app:schur}

In this section, we present the proposed \textit{blind Schur-sampling} protocol as an alternative implementation of the theoretical limits derived by Leone et al.~\cite{Leone2025}. The protocol is described as the following game $\mathcal{G}_{\text{Schur}}$ between the repeater $\mathcal{R}$ and the adversary $\mathcal{A}$: \begin{enumerate} \item[Step 1] The repeater $\mathcal{R}$ initializes a buffer of size $k = \text{poly}(n)$.
\item [Step 2] The adversary $\mathcal{A}$ submits a state {\color{black} $\rho^{\otimes k}_{\rm jam}$} to the buffer.
\item [Step 3] \textbf{Verification:} $\mathcal{R}$ applies the Quantum Schur Transform (QST) $U_{\text{Schur}}$ on the input state and measures the irreducible representation label $\lambda$.
\item[Step 4] \textbf{Verdict:} $\mathcal{R}$ accepts the state if $\lambda = \lambda_{\text{sym}} \equiv (k, 0, \dots, 0)$, corresponding to the symmetric subspace. Otherwise, $\mathcal{R}$ rejects.
\end{enumerate}
\textbf{Theorem.} For any adversary $\mathcal{A}$ submitting a pseudoentangled state $\sigma_{pe}$ (computationally indistinguishable from a Haar-random state), the acceptance probability is bounded by $\text{Pr}[\text{Accept}] \le \frac{1}{k!} + \text{negl}(n)$.

\textit{Proof.} We analyze the acceptance condition for the ideal and adversarial cases. Let $\Pi_{\text{sym}}$ be the projector onto the symmetric subspace of $(\mathbb{C}^d)^{\otimes k}$. The acceptance probability is $P_{\text{acc}} = \text{Tr}(\Pi_{\text{sym}} \rho^{\otimes k})$.

\textit{Case 1: Ideal Signal.} If the input is the target state $\ket{\Phi^+}^{\otimes k}$, the state is invariant under all permutations of the $k$ copies. Thus, $\rho^{\otimes k}$ resides entirely within the symmetric subspace, and $P_{\text{acc}}(\Phi^+) = 1$.

\textit{Case 2: Adversarial Noise.} The pseudoentangled state $\sigma_{pe}$ is indistinguishable from a Haar-random state $\phi_{\text{Haar}}$. By Levy's Lemma, a random state in the tensor product space $(\mathbb{C}^d)^{\otimes k}$ is maximally delocalized across the Schur sectors. The dimension of the symmetric subspace is $D_{\text{sym}} = \binom{d+k-1}{k}$~\cite{harrow2013church} while the total dimension is $D_{\text{total}} = d^k$. The probability of a random state projecting onto $\lambda_{\text{sym}}$ is given by the ratio of dimensions:
\begin{equation}
    P_{\text{acc}}(\phi_{\text{Haar}}) = \frac{D_{\text{sym}}}{D_{\text{total}}} = \frac{(d+k-1)!}{k!(d-1)!d^k} \approx \frac{1}{k!}.
\end{equation}
Since $\sigma_{pe} \approx_c \phi_{\text{Haar}}$, the acceptance probability for the adversarial state is bounded by this ratio up to a negligible factor. For any buffer size $k > 2$, this probability vanishes super-exponentially. Thus, the blind Schur-sampling protocol acts as a cryptographic filter with extinction ratio $R \approx k!$, rejecting adversarial states without requiring parameter estimation.

\end{document}